\documentclass[oneside,english]{amsart}
\usepackage{lmodern}

\usepackage[T1]{fontenc}
\usepackage[latin9]{inputenc}
\usepackage{color}
\definecolor{note_fontcolor}{rgb}{0.800781, 0.800781, 0.800781}
\usepackage{verbatim}
\usepackage{units}
\usepackage{amsthm}
\usepackage{amstext}
\usepackage{amssymb}
\usepackage{graphicx}

\makeatletter

\newenvironment{lyxgreyedout}
  {\textcolor{note_fontcolor}\bgroup\ignorespaces}
  {\ignorespacesafterend\egroup}

\numberwithin{equation}{section}
\numberwithin{figure}{section}
 \theoremstyle{definition}
  \newtheorem{example}{\protect\examplename}
\theoremstyle{plain}
\newtheorem{thm}{\protect\theoremname}
\newcommand{\code}[1]{\texttt{#1}}

\makeatother

\usepackage{babel}
  \providecommand{\examplename}{Example}
\providecommand{\theoremname}{Theorem}

\begin{document}

\title{Doubly F-bounded Generics}

\author{Moez A. AbdelGawad\\
Informatics Research Institute, SRTA-City, Alexandria, Egypt\\
\texttt{moez@cs.rice.edu}}
\begin{abstract}
In this paper we suggest how f-bounded generics in nominally-typed
OOP can be extended to the more general notion we call `doubly f-bounded
generics' and we suggest how doubly f-bounded generics can be reasoned
about. We also (attempt to) prove, using a coinductive argument, that our reasoning
method is mathematically sound.
\end{abstract}

\maketitle

\section{Introduction}

\noindent F-bounded generics, as found in mainstream OO programming
languages such as Java, C\#, Scala and Kotlin%
, allows a type variable to be used in defining the upper bound of
the type variable, \emph{i.e.}, in defining its own upper bound. Examples
of f-bounded generic class declarations include the following declarations.\medskip{}

\code{\textbf{class} C<T> \{\} // used in definition of class D}

\code{\textbf{class} D<T \textbf{extends} C<T>\textcompwordmark{}> \{\}
// T used to define its own upper bound} \medskip{}

\code{\textbf{class} E<T \textbf{extends} E<T>\textcompwordmark{}> \{\}
// E \& T used to define the bound of T}

\subsection{Doubly F-Bounded Generics}

~

Simply stated, doubly f-bounded generics allows a type variable to
be used in defining \emph{both} an upper bound \emph{and} a lower
bound of the type variable.

Examples of doubly f-bounded generic class declarations include the
following ones.\medskip{}

\code{\textbf{class} C<T> \{\} // used in definitions below}

\code{\textbf{class} D<T> \textbf{extends} C<T> \{\} // used in definitions
below}

\code{\textbf{class} E<T> \textbf{extends} D<T> \{\} // used in definitions
below}

\code{\textbf{class} F<E<T> \textbf{extends }T \textbf{extends} C<T>\textcompwordmark{}>
\{\} // T has lower \& upper bounds}\footnote{Some may prefer this declaration to be written as

\begin{center}
\code{\textbf{class} F<T \textbf{extends} C<T> \textbf{super} E<T>\textcompwordmark{}>
\{\}}
\par\end{center}

as suggested for example in earlier literature.}\medskip{}

\code{\textbf{class} G<G<T> \textbf{extends }T \textbf{extends} C<T>\textcompwordmark{}>
\textbf{extends} D<T> \{\} // G \& T used}

\code{~~// to define lower bound of T. T also used to define upper bound}\medskip{}

\code{\textbf{class} H<J<T> \textbf{extends }T \textbf{extends} H<T>\textcompwordmark{}>
\{\} // H \& T used}

\code{~~// to define upper bound of T. T also used to define lower bound}

\code{\textbf{class} I<T> \textbf{extends} H<T> \{\} // used in definition
of class J}

\code{\textbf{class} J<T> \textbf{extends} I<T> \{\} // used in definition
of class H}\medskip{}

(Note that a declaration such as

\begin{center}
\code{\textbf{class} F<F<T> \textbf{extends }T \textbf{extends} F<T>\textcompwordmark{}>
\{\}}
\par\end{center}

\noindent is a useless declaration. No type argument can be used to
instantiate class \code{F}, since no type argument can be simultaneously
a subtype and a supertype of the same type yet be unequal to it\footnote{\label{fn:TneqFT}If such a declaration were allowed, the necessary
antisymmetry property of subtyping forces \code{T} to be equal to
\code{F<T>}, but only infinite types \code{T} can satisfy this equality.
The nominality of subtyping, which necessitates the \emph{explicit}
declaration of inheritance/subtyping relations between classes, and
the prohibition of expressing circular inheritance/subtyping relations
between classes prohibits the explicit expression of subtyping relations
that involve infinite types (since only finite types can be expressed
explicitly).}).

\section{\label{sec:Illustrating-Example}Illustrating Example}

To better understand f-bounded generics and doubly f-bounded generics,
let's recall that the term `f-bounded generics' actually means
`function-bounded generics' (or, more precisely, using category-theoretic
language, it means `functor-bounded generics'). This means that
a (lower or upper) bound of a type variable of some generic class
is \emph{not} a constant type (even if an infinite one) but that the
bound \emph{varies} with the value of the type variable that gets
passed to the class. This in turn means that each type argument that
may instantiate the generic class has two corresponding bounding types
defined by the functions specified as the bounding functions. The
type argument is a \emph{valid} type argument if the type argument
is a subtype of the corresponding upper bounding type and a supertype
of the corresponding lower bounding type.

\subsection{Unbounded Functions}

~

To illustrate more vividly how we view f-bounded generics, and more
generally how \emph{doubly} f-bounded generics can be modeled, let's
consider functions from analysis, \emph{i.e.}, functions of type $\mathbb{R}\rightarrow\mathbb{R}$
from real numbers to real numbers (extended with $-\infty$ and $\infty$).
\begin{example}
\label{exa:f}Consider the function $f\left(x\right)=x^{3}$ plotted
in Figure~\ref{fig:Function1}. Function $f$ is defined over all
real numbers $x$ such that $-\infty<x<\infty$. For our purposes
it is more convenient to include $-\infty$ and $\infty$ in $\mathbb{R}$
and to define $f(-\infty)=-\infty$ and $f(\infty)=\infty$. Thus
the domain of $f$ is the closed interval $\left[-\infty,\infty\right]$
(\emph{i.e.}, $f\left(x\right)$ is defined for $-\infty\leq x\leq\infty$).\footnote{The infinite values $-\infty$ and $\infty$ here play a role similar
to the role played by types \code{Null} and \code{Object}, respectively,
in the OO subtyping relation.}
\begin{figure}[h]
\begin{centering}
\includegraphics[scale=0.4]{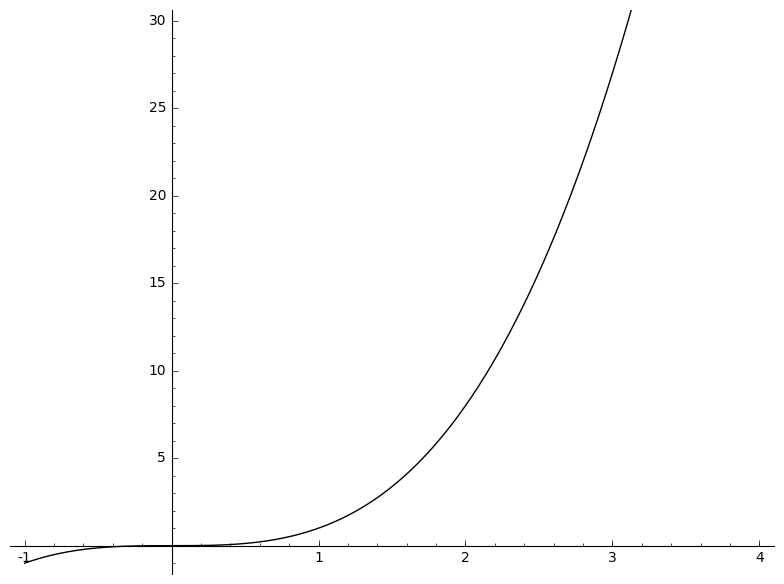}
\par\end{centering}

\protect\caption{\label{fig:Function1}Function $f\left(x\right)=x^{3}$.}
\end{figure}

\end{example}

\subsection{Doubly (Constant) Bounded Functions}

~

To get a step closer to our model of doubly f-bounded generics, we
first consider restricting the domain of a function using constants
(also sometimes called `constant functions', \emph{i.e.}, functions
whose output value is independent of their input argument).
\begin{example}
Consider restricting the domain of the function $f$ (of Example~\ref{exa:f})
to be the closed interval $\left[1,3\right]$. This domain-restricted
function can be expressed as 
\[
f\left(1\leq x\leq3\right)=x^{3}.
\]
Figure~\ref{fig:Function2} is a plot of this domain-restricted function.
\begin{figure}[h]
\begin{centering}
\includegraphics[scale=0.4]{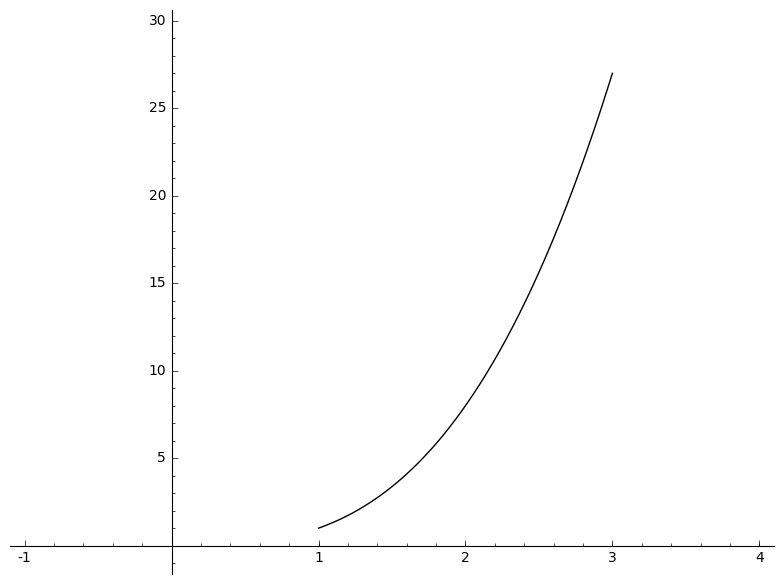}
\par\end{centering}

\protect\caption{\label{fig:Function2}Function $f\left(x\right)=x^{3}$ for $x\in\left[1,3\right]$.}
\end{figure}

\end{example}

\subsection{Doubly F-Bounded Functions}

~

More interestingly, we can consider restricting or bounding the domain
of $f$ using two (non-constant) \emph{functions} over $x$.
\begin{example}
Consider the function 
\[
f\left(\nicefrac{x}{2}\leq x\leq3x\right)=x^{3},
\]
whose parameter $x$ is \emph{f-bounded} (\emph{i.e.}, function-bounded)
by the two functions $l\left(x\right)=\nicefrac{x}{2}$ (for lower
bound) and $u\left(x\right)=3x$ (for upper bound), plotted in Figure~\ref{fig:Function3}.
\begin{figure}[h]
\noindent \begin{centering}
\includegraphics[scale=0.4]{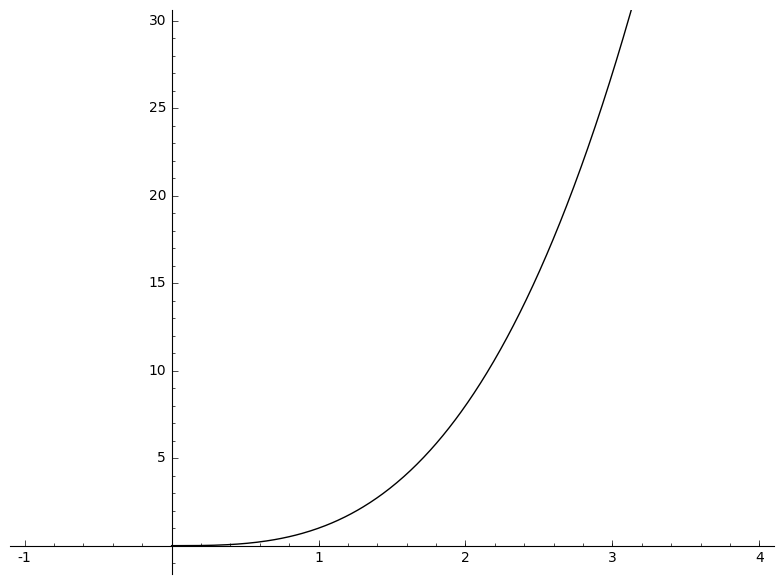}
\par\end{centering}

\protect\caption{\label{fig:Function3}Function $f\left(x\right)=x^{3}$ for $x\in\left[\frac{x}{2},3x\right]$.}
\end{figure}

Notice that for plotting $f$ we had to first decide which values
for $x$ are \emph{valid} arguments to $f$, \emph{i.e.}, which values
simultaneously satisfy the two inequalities $x/2\le x$ and $x\le3x$.

Using simple reasoning, it is easy to see that \emph{both} inequalities
are satisfied only for values of $x\geq0$ (check Figure~\ref{fig:Function4}
where valid values of $x$ are those for which the corresponding green
dotted line lies \emph{above} the red line and \emph{below} the blue
line). Hence the plot of $f$ in Figure~\ref{fig:Function3}. It
should be noted that the plot of $f$ can be made only \emph{after}
the domain of $f$ (\emph{i.e.}, valid ranges for arguments of $f$)
is decided (\emph{e.g.}, using the plots of $l$ and $u$).
\begin{figure}[h]
\noindent \begin{centering}
\includegraphics[scale=0.4]{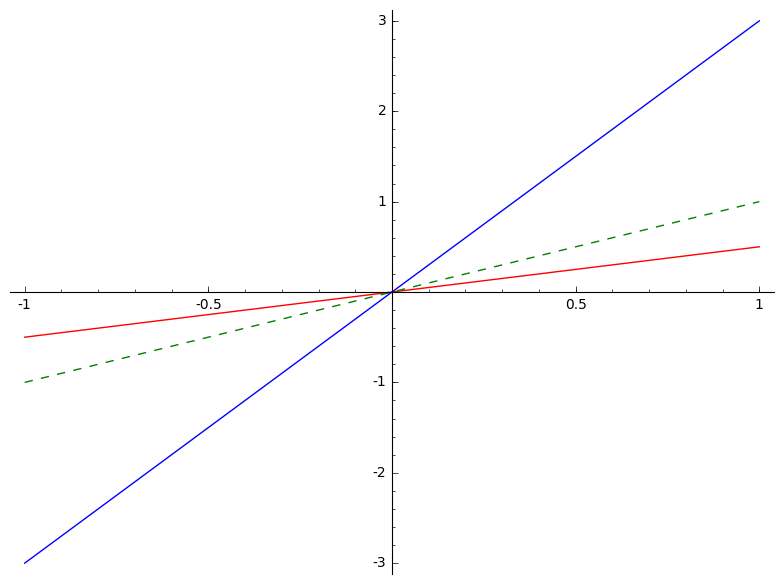}
\par\end{centering}

\protect\caption{\label{fig:Function4}Functions $l\left(x\right)=\frac{x}{2}$ and
$u\left(x\right)=3x$, together with $id(x)=x$.}
\end{figure}

\end{example}
To make things even more interesting and more ``realistic'', we
can use slightly more complex bounding functions.
\begin{example}
Consider the f-bounded function 
\[
f\left(\left(x-2\right)^{2}+1\leq x\leq-\left(x-2\right)^{2}+3\right)=x^{3}
\]
plotted in Figure~\ref{fig:Function5}%
. The approximate domain of $f$ can be decided using Figure~\ref{fig:Function6}%
. Approximately, the valid values of $x$ are $x\in\left[1.4,2.6\right]$
(using the quadratic formula, valid values of $x$ precisely are $x\in\left[\frac{5-\sqrt{5}}{2},\frac{3+\sqrt{5}}{2}\right]$).
\begin{figure}[h]
\noindent \begin{centering}
\includegraphics[scale=0.4]{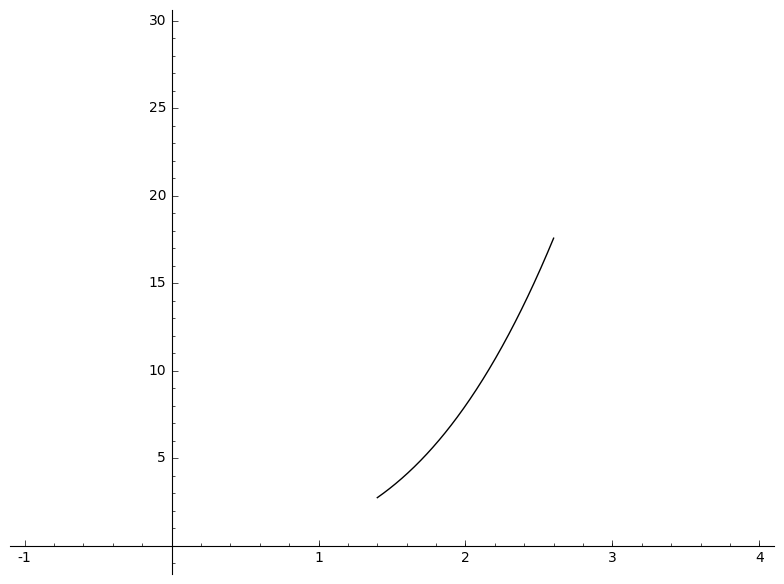}
\par\end{centering}

\protect\caption{\label{fig:Function5}Function $f\left(x\right)=x^{3}$ for $x\in\left[\left(x-2\right)^{2}+1,-\left(x-2\right)^{2}+3\right]$.}
\end{figure}
\begin{figure}[h]
\noindent \begin{centering}
\includegraphics[scale=0.4]{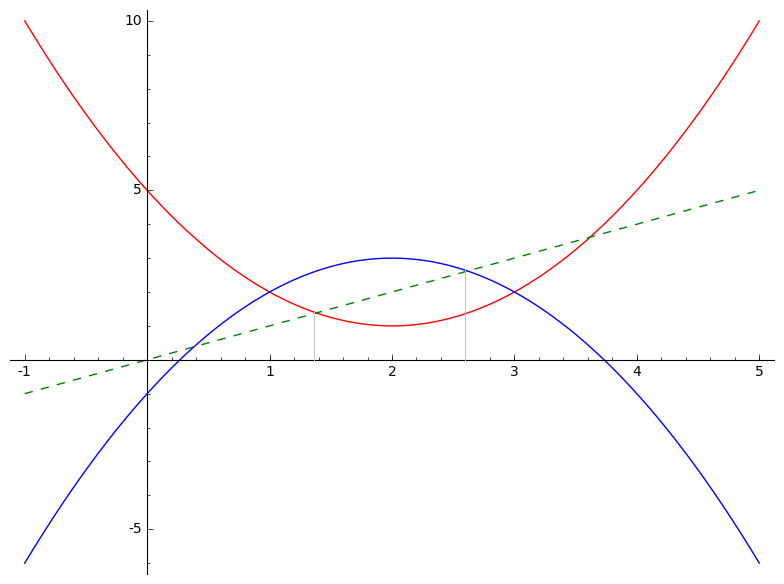}
\par\end{centering}

\protect\caption{\label{fig:Function6}Functions $l\left(x\right)=\left(x-2\right)^{2}+1$
and $u\left(x\right)=-\left(x-2\right)^{2}+3$, together with $id(x)=x$.}
\end{figure}

\end{example}
Finally, we make things even more interesting, where the restricted
domain of an f-bounded function can be the union of \emph{multiple}
intervals over $\mathbb{R}$.
\begin{example}
Consider the f-bounded function 
\[
f\left(l\left(x\right)\leq x\leq u\left(x\right)\right)=x^{3},
\]
where 
\[
l\left(x\right)=\left(x-5\right)^{3}-10x+65
\]
 and 
\[
u\left(x\right)=-\left(x-5\right)^{3}+10x-37,
\]
plotted in Figure~\ref{fig:Function7}%
. The approximate domain of $f$ can be decided using Figure~\ref{fig:Function8}%
. Approximately, valid values of $x$ are $x\in\left[-\infty,1.3\right]\cup\left[6,7.7\right]$.
(The precise valid values of $x$ can be found using Cardano's formula%
).
\begin{figure}[h]
\noindent \begin{centering}
\includegraphics[scale=0.4]{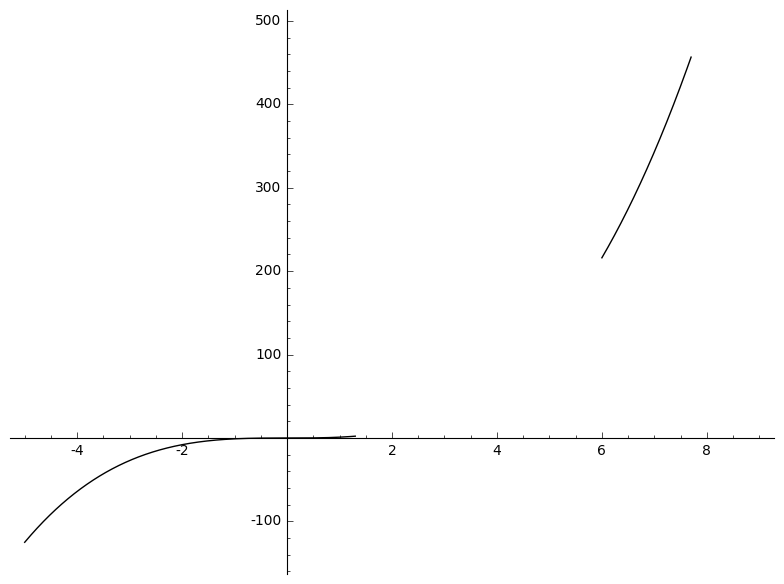}
\par\end{centering}

\protect\caption{\label{fig:Function7}Function $f\left(x\right)=x^{3}$ for $x\in\left[\left(x-5\right)^{3}-10x+65,-\left(x-5\right)^{3}+10x-37\right]$.}
\end{figure}
\begin{figure}[h]
\noindent \begin{centering}
\includegraphics[scale=0.4]{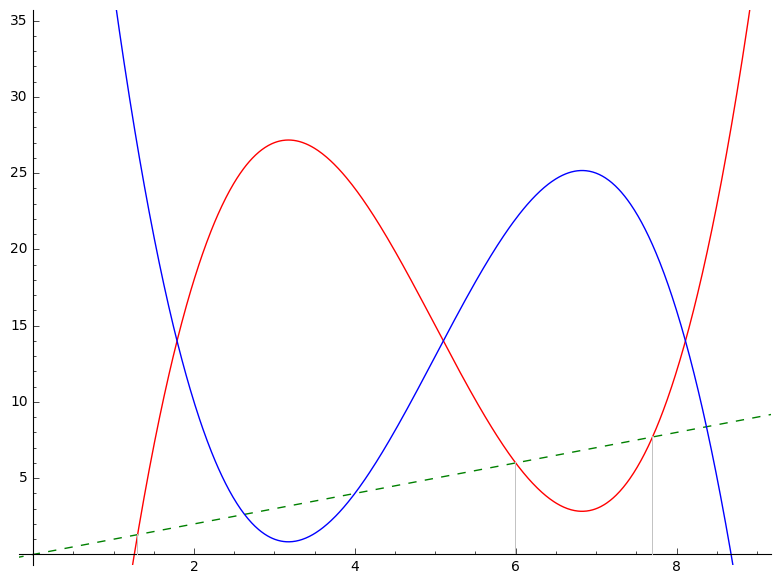}
\par\end{centering}

\protect\caption{\label{fig:Function8}Functions $l\left(x\right)=\left(x-5\right)^{3}-10x+65$
and $u\left(x\right)=-\left(x-5\right)^{3}+10x-37$, together with
$id(x)=x$.}
\end{figure}

From the curves in Figure~\ref{fig:Function8}, and their crossing
points, we deduce that no other intervals are included in the domain
of $f$ (as noted earlier, valid values of $x$ must have the corresponding
red curve \emph{below} the dotted green line and the corresponding
blue curve \emph{above} the dotted green line, but one or both of
these two conditions are not true in all intervals lying outside $\left[-\infty,1.3\right]$
and $\left[6,7.7]\right]$).
\end{example}

\section{Bounded Generics}

Understanding the simple example of f-bounded functions over the real
numbers that we presented in Section~\ref{sec:Illustrating-Example},
particularly how the domain of these functions is decided, is key
to understanding how we view doubly f-bounded generics.

It should be noted that in all functions considered in Section~\ref{sec:Illustrating-Example}
we had a fixed ``template'' 
\[
f\left(l\left(x\right)\leq x\leq u\left(x\right)\right)=\ldots
\]
that got filled/instantiated with different pairs of functions $l\left(x\right)$
and $u\left(x\right)$ that define the lower and upper bound for each
value of $x$, respectively.

As the reader may have intuitively guessed by now, the two most significant
differences between f-bounded functions and our model of doubly f-bounded
generics are, firstly, switching from the \emph{totally} ordered set
$\mathbb{R}$ of real numbers (ordered by less-than-or-equals, $\leq$)
to the \emph{partially} ordered set $\mathbb{T}$ of ground generic
types (ordered by subtyping, $<:$), then, secondly, switching from
functions over real numbers $\mathbb{R}$ (which map real numbers
to real numbers) to ``functions''---more accurately, generic classes/type
constructors---over types $\mathbb{T}$ (which map types to types).

The definition of a function over a partially ordered f-bounded domain
may not be visually intuitive as its totally ordered counterparts
(as illustrated in Section~\ref{sec:Illustrating-Example}), yet
the abstract non-visual understanding of how such functions are defined
can be almost as simple as understanding the definitions of the example
functions (defined over the totally ordered set $\mathbb{R}$) we
presented in Section~\ref{sec:Illustrating-Example}.

The iterative construction of the graph of $\mathbb{T}$---the subtyping
relation between ground generic types in nominally-typed OOP---was
presented in~\cite{AbdelGawad2018b}, using the graph theoretic notion
of partial Cartesian graph products~\cite{AbdelGawad2018a}. Similar
to how the different domains of function $f$ were decided in the
examples of Section~\ref{sec:Illustrating-Example}, a type $T_{a}\in\mathbb{T}$
is valid as a type argument to some doubly f-bounded generic class
if the bounding ground types $l\left(T_{a}\right)\in\mathbb{T}$ and
$u\left(T_{a}\right)\in\mathbb{T}$ define an \emph{interval type}
in $\mathbb{T}$~\cite{AbdelGawad2018c}. More precisely, a type
$T_{a}\in\mathbb{T}$ is a valid type argument if there exists a \emph{path}
in the graph of $\mathbb{T}$ that goes from the lower bound type
$l\left(T_{a}\right)$ to the upper bound type $u\left(T_{a}\right)$
passing through $T_{a}$, or equivalently, if both of $\left[l\left(T_{a}\right),T_{a}\right]$
and $\left[T_{a},u\left(T_{a}\right)\right]$ are interval types in
$\mathbb{T}$.\footnote{While referring to the different plots of $l\left(x\right)$ and $u\left(x\right)$
in Section~\ref{sec:Illustrating-Example} (in which a dotted green
line represents the identity function $id\left(x\right)=x$, a red
curve represents the lower bounding function $l\left(x\right)$, and
a blue curve represents the upper bounding function $u\left(x\right)$),
it should be noted that this condition corresponds to (\emph{i.e.},
is the partial order counterpart of) the condition that the dotted
green line is above (\emph{i.e.}, $\geq$) the red curve and below
(\emph{i.e.}, $\leq$) the blue curve.}

\section{\label{sec:input-side_rec}Input-Side Recursion}

The usefulness and value of the example of functions from analysis
lies not only in providing a means to present (doubly) f-bounded functions
in a simpler setting (\emph{i.e.}, that of a totally ordered set)
but also in it possibly \emph{offering} \emph{inspiration} when answering
questions that may seem hard in the context of doubly f-bounded generics
but are simpler to answer in the context of functions in analysis,
as illustrated by the following example.
\begin{example}
\begin{flushleft}
\label{exa:Enum}Consider the generic class declaration
\par\end{flushleft}

\begin{flushleft}
\code{\textbf{~~class} Enum<T \textbf{extends} Enum<T>\textcompwordmark{}>}.
\par\end{flushleft}

This declaration is considered, by many OO software developers, to
be among the most confusing class declarations, not only because of
the use of type variable \code{T} in its own bound (which is \emph{the}
defining feature of f-bounded generics) but also because the very
class getting declared (namely, class \code{Enum}) is also used to
define the bound of the type variable \code{T}.

Fortified with the examples presented in Section~\ref{sec:Illustrating-Example},
however, it should now be clear that this declaration is similar to
the domain-restricted function $f\left(x\leq x^{3}\right)$=$x^{3}$.
Pondering a little over this definition of $f$, it can be easily
seen that the definition states that $f$ is defined only for values
of $x$ that are less than the\emph{ unbounded} function $x^{3}$,
which (as if accidentally) happens to have the same expression as
$f$ itself (but not the same domain).

Given the plot of $x^{3}$ in Figure~\ref{fig:Function10}%
{} (which, except for the additional dotted green line for the identity
function, is the same as the plot in Figure~\ref{fig:Function1}),
we can see that $f$ is defined for values of $x\in\left[-1,0\right]\cup\left[1,\infty\right]$
(\emph{i.e.}, values of $x$ for which the green dotted line in Figure~\ref{fig:Function10}
is \emph{below} the curve of $x^{3}$), and, accordingly, that $f$
has the graph plotted in Figure~\ref{fig:Function9}.%
\begin{figure}[h]
\begin{centering}
\includegraphics[scale=0.4]{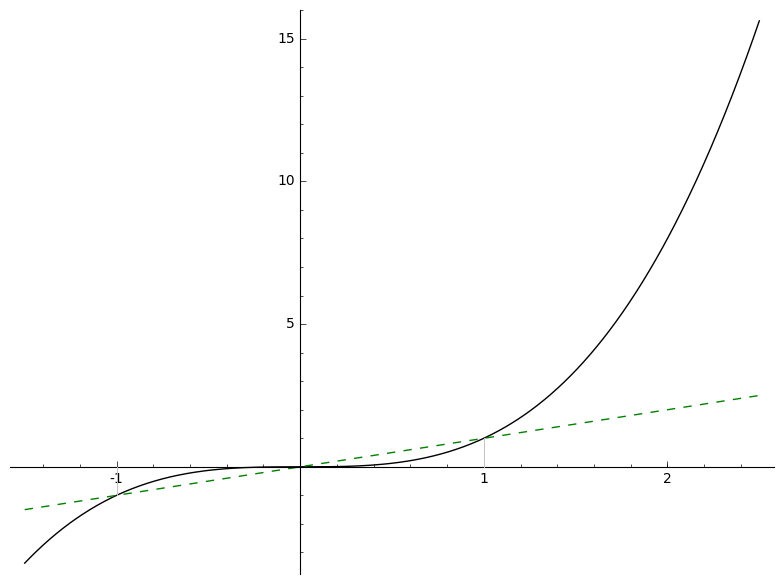}
\par\end{centering}

\protect\caption{\label{fig:Function10}Function $f\left(x\right)=x^{3}$, together
with $id(x)=x$.}
\end{figure}
\begin{figure}[h]
\noindent \begin{centering}
\includegraphics[scale=0.4]{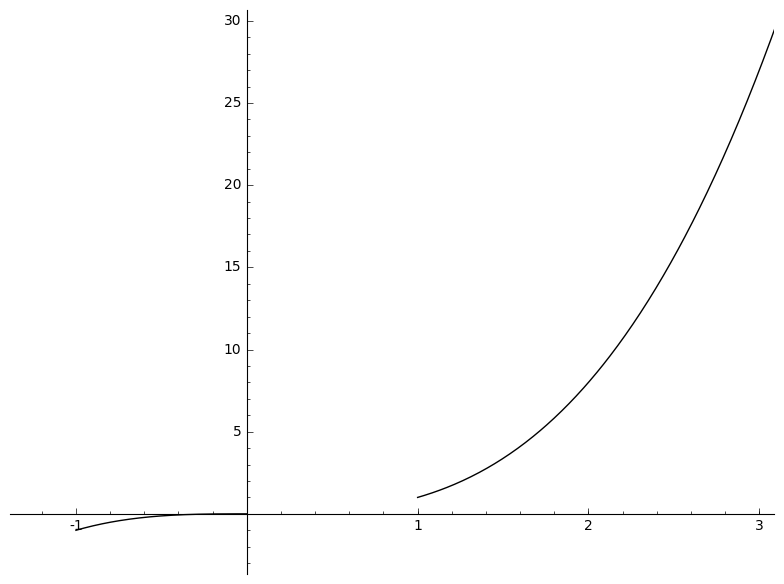}
\par\end{centering}

\protect\caption{\label{fig:Function9}Function $f\left(x\right)=x^{3}$ for $x\in\left[-\infty,x^{3}\right]$.}
\end{figure}

\end{example}
It may be argued, for good reasons, that the $x^{3}$ in the bound
of $x$ (in the definition of $f$) should actually be interpreted,
as is customary, as a recursive definition of $f$ (\emph{i.e.}, one
that involves a self-reference) and thus that the definition of $f$
should rather be written as $f\left(x\leq f\left(x\right)\right)=x^{3}$
and that the domain (\emph{i.e.}, valid values of $x$) should be
decided accordingly. However, it is our claim %
that \emph{for our purposes }(namely, deciding valid values of $x$,
\emph{i.e.}, deciding the domain of $f$) this would \emph{make no
difference} (\emph{i.e.}, that the resulting domain of $f$ will be
the same).

The reason behind our claim (which is corroborated by the example
in Section~\ref{sec:Illustrating-Example}, as well as many examples
one can think of\footnote{Can our claim be proven? We believe it can, and we believe the proof,
even for general functions on partially-ordered sets, will likely
be a simple proof. As such we believe we may be able to produce this
proof soon, instead of having to depend on corroborating examples
(and the lack of counterexamples) to support our claim. (See Appendix~\ref{sec:On-Deciding}
for a proof attempt.)}) is that self-references in genuine recursive definitions of functions
affect the value of the function itself (\emph{i.e.}, the ``return/output
value'' of the function, \emph{e.g.}, as in the recursive definitions
of the factorial/Gamma function $f\left(x\right)=x*f\left(x-1\right)$
and the Fibonacci function $f\left(x\right)=f\left(x-1\right)+f\left(x-2\right)$),
unlike the case we have at hand (\emph{i.e.}, f-bounded functions
and f-bounded generics) where the self-reference plays a different
role and is used rather differently, \emph{i.e.}, only to decide valid
input values to the function. We tentatively call these two different
uses of self-reference as `recursion on the output/codomain side
of the function' (customary recursion) and `recursion on the input/domain
side of the function' (\emph{i.e.}, input-side recursion/self-reference),
respectively.

\subsection{Valid Type Arguments and Admittable Type Arguments}

~

An immediate implication on type checking and subtype checking in
Java (and similar nominally-typed OO programming languages) that is
suggested by our claim %
is that when particularly checking whether a type argument to a generic
class with input-side recursion is a valid type argument to the class
(\emph{i.e.}, checking that the type argument is a subtype of its
upper bound and a supertype of its lower bound) \emph{no} \emph{recursive}
\emph{referencing} back to the subtyping relation (involving the \emph{same}
particular pair of types) is necessary, since (according to our model)
\emph{all} type arguments passed to the bounding functions in such
a case are indeed valid type arguments that (as long as they are well-formed
types) are in no need of validity checking.

Let us illustrate this with an example.
\begin{example}
Consider the Java class declarations

\code{\textbf{class} Enum<T \textbf{extends} Enum<T>\textcompwordmark{}>
\{\}}

\code{\textbf{class} Color \textbf{extends} Enum<Color> \{\}}.

During type checking a program containing these declarations, particularly
when checking whether a type argument (such as \code{Object} or \code{Color})
is a valid type argument to class \code{Enum} (\emph{i.e.}, whether
\code{Enum<Object>} or \code{Enum<Color>} are valid types) the type
checking algorithm must confirm that the type argument satisfies its
bound(s) (\emph{i.e.}, whether \code{Object} is a subtype of \code{Enum<Object>}
or \code{Color} is a subtype of \code{Enum<Color>}). By our model
and claim, these second instantiations of class \code{Enum} (\emph{i.e.},
types \code{Enum<Object>} and \code{Enum<Color>}), which appeared
while checking the validity of type arguments to \code{Enum}, need
\emph{not} be checked for the validity of their own type arguments
(\emph{i.e.}, types \code{Object} and \code{Color}), since (similar
to the expression $x^{3}$ in Example~\ref{exa:Enum} of Section~\ref{sec:input-side_rec})
class \code{Enum} is treated---in only this context where the type
checking algorithm is checking the validity of a type argument to
the class---as having \emph{unrestricted/unbounded }type parameters,
and thus that these second instantiations of \code{Enum} are valid
types (\emph{i.e.}, in no need of validation themselves).

Given that class \code{Object} (the standard class) does not extend
class \code{Enum}, and thus type \code{Object} is not a subtype
of \code{Enum<Object>} (the second instantiation), the type checking
algorithm concludes that the type \code{Enum<Object>} (\emph{i.e.},
the original/first instantiation that we started with during type
checking) is not a valid type. On the other hand, given the extends
clause in the declaration of class \code{Color}, type \code{Color}
is a subtype of \code{Enum<Color>} (the second instantiation), and
thus the type checking algorithm concludes that the first instantiation
\code{Enum<Color>} is a valid type.
\end{example}
It should be noted that the reasoning method used above (suggested
by our model of f-bounded generics) differs significantly from the
reasoning method%
{} upon which current implementations of type checking in OO compilers
and OO type systems are based%
, which, although reaching the same decisions regarding class \code{Enum}
as those we reached above, resort to much more complex infinite/coinductive
logical arguments to justify such typing/subtyping decisions.%

Given the discussion and the example above, to formalize our reasoning
method we make a distinction regarding type arguments, where we differentiate
between \emph{admittable} type arguments of a class and \emph{valid}
type arguments of the class.

In particular, for any generic class \code{G} a type \code{TA} is
an \emph{admittable }type\emph{ }argument of class \code{G} as long
as \code{TA} is a well-formed (reference/object) type, particularly
disregarding any declared bounds on the corresponding type variable
in \code{G}. On the other hand, in \emph{all} \emph{but one }of the
program contexts where a parameterized type can occur, an admittable
type argument \code{TA} of \code{G} is a \emph{valid }type\emph{
}argument if \code{TA} also satisfies the bounds declared in \code{G}
on the corresponding type variable (\emph{i.e.}, if \code{TA} is
a supertype of its lower bound and a subtype of its upper bound).
That is, in all such contexts \code{G<TA>} should be accepted by
the type checker as a valid parameterized type. In the context where
bounds of the type variable(s) of \code{G} are declared, however,
our model of f-bounded generics necessitates that \emph{all} admittable
type arguments of \code{G} are also considered valid type arguments.

In other words, our model of f-bounded generics (including doubly
F-bounded generics) states that all valid type arguments of a generic
class \code{G} are (by definition) admittable ones, in all contexts,
and it requires that the converse (\emph{i.e.}, that admittable arguments
are valid) holds %
in the special context of declaring bounds of type variables of \code{G}.
In all other contexts, an admittable type argument of \code{G} is
valid if and only if it also satisfies the declared bounds in \code{G}.

\section{Discussion}

In this paper, using a notion we call `f-bounded functions' from
analysis, we illustrated that a bound of a type variable in f-bounded
generics is a function (over types, \emph{i.e.}, is of type $\mathbb{T\rightarrow\mathbb{T}}$,
where $\mathbb{T}$ is the set of ground types) that specifies a bound
for each value of the type variable, which in turn decides whether
the value (\emph{i.e.}, a type argument) is a valid type argument.

Our illustration immediately suggested how f-bounded generics can
be generalized to doubly f-bounded generics, where both an upper and
a lower bounding function (over types) can be specified.

Our illustrating example further allowed us to consider how we may
reason about functions (in analysis) that have (what we call) `input-side
recursion,' \emph{i.e.}, functions where the definition of a function
specifies that the value of the function at some input value is an
(upper or lower) bound of the input value.

Accordingly, we suggested how we can reason, in the same way, about
the declaration of a generic class with input-side recursion (\emph{i.e.},
where the instantiation of the generic class having the type variable
as the type argument is a bound of the type argument, \emph{e.g.},
as in the class declaration \code{\textbf{class} C<T\textbf{~extends}~C<T>\textcompwordmark{}>},
where the particular instantiation of class \code{C} whose type argument
is \code{T} is an upper bound of \code{T}).

We finally also discussed one of the possible implications of our
model of f-bounded generics on the type checking algorithm of nominally-typed
OOP languages.

\bibliographystyle{plain}

\begin{thebibliography}{1}
	
	\bibitem{AbdelGawad2018b}
	Moez~A. AbdelGawad.
	\newblock Java subtyping as an infinite self-similar partial graph product.
	\newblock {\em Available as arXiv preprint at http://arxiv.org/abs/1805.06893},
	2018.
	
	\bibitem{AbdelGawad2018a}
	Moez~A. AbdelGawad.
	\newblock Partial cartesian graph product (and its use in modeling {J}ava
	subtyping).
	\newblock {\em Available as arXiv preprint at http://arxiv.org/abs/1805.07155},
	2018.
	
	\bibitem{AbdelGawad2018c}
	Moez~A. AbdelGawad.
	\newblock Towards taming {J}ava wildcards and extending {J}ava with interval
	types.
	\newblock {\em Available as arXiv preprint at http://arxiv.org/abs/1805.10931},
	2018.
	
	\bibitem{Kozen2016}
	Dexter Kozen and Alexandra Silva.
	\newblock Practical coinduction.
	\newblock {\em Mathematical Structures in Computer Science}, 27(7):1132--1152,
	2016.
	
\end{thebibliography}

\appendix

\section{\label{sec:On-Deciding}On Deciding the Domains of Doubly F-bounded
Functions Over Partially-Ordered Sets}

In this appendix %
we analyze deciding the domain of doubly f-bounded functions (dfbfs,
for short) defined over partially-ordered sets. We mathematically
prove that the domain of dfbfs can be decided \emph{without} resorting
to any coinductive arguments (other than inside our proof itself),
even in cases where a self-reference may exist in the definition of
the domain of such functions. Our proof has immediate implications
on supporting doubly F-bounded generics in nominally-typed OOP languages,
and on the behavior of the type checking algorithm used in these languages
when it checks the validity of parameterized types.%

\subsection{Motivation}

To illustrate how doubly f-bounded generics for nominally-typed OOP
may be defined, in the main body of this paper we presented the notion
of doubly f-bounded functions that are defined over partially ordered
sets. In doubly f-bounded generics, a type variable of a generic class
can be lower bounded and upper bounded by instantiations of (other)
generic classes that take the type variable as their type argument.
As such, doubly f-bounded generic classes can be considered as instances
of doubly f-bounded functions where the partially-order these functions
are defined over is, specifically, the subtyping relation between
ground generic types (which is a reflexive, antisymmetric and transitive
relation, thus defining a poset over the set of ground generic types).

The question arose, during our presentation, on how to decide the
domain of these functions, and whether it can be mathematically proven
(probably using a coinductive argument~\cite{Kozen2016}) that the
domains can be decided easily (\emph{i.e.}, \emph{without} explicitly
resorting to coinductive arguments in the decision procedure).\footnote{Informally, as a proof principle, coinduction states that a property
holds if there is no good reason for the property not to hold.} Hence this appendix.

\subsection{Preliminaries}

Let $P$ be a partially-ordered set. Let $f:P\rightarrow P$ be a
function defined over $P$ (\emph{i.e.}, whose domain and codomain
are the same, thus sometimes also called an endofunction or endomap
over $P$). Let $l$, $u$ be two other endofunctions over $P$.

In this paper we consider restricting the domain of $f$, using functions
$l$ and $u$. In particular, we stipulate that a value $x$ in the
domain of $f$ has to be greater than or equal to the value of function
$l$ at $x$, \emph{i.e.}, that $l\left(x\right)\leq x$, and that
it, \emph{i.e.}, $x$, has to be smaller than or equal to the value
of function $u$ at $x$, \emph{i.e.}, that $x\leq u\left(x\right)$.
This restricted-domain function $f$ can be expressed succinctly as
\[
f\left(l\left(x\right)\leq x\leq u\left(x\right)\right).
\]
We call such restricted-domain functions \emph{doubly f-bounded functions}
(or, dfbfs, for short).

\subsection{Deciding Domains of Doubly F-bounded Functions}

In the main body of this paper we gave examples that illustrate how
the domain of dfbfs from analysis (\emph{i.e}.\emph{, }defined over
the real numbers $\mathbb{R}$) can be decided, seemingly easily using
the plots of the functions involved. That included even examples for
the special cases (of practical interest) where the defined function
$f$ is itself one of the two bounds of its own parameter $x$ (but
not both), \emph{i.e.}, the cases\footnote{We call these dfbfs as ones with `input-side recursion' or with
`input-side self-reference'.} where the definition of $f$ can be expressed as
\[
f\left(l\left(x\right)\leq x\leq f\left(x\right)\right)\textrm{ or dually }f\left(f\left(x\right)\leq x\leq u\left(x\right)\right).
\]

It should be noted that if $f$ is used as the bounding functions
for both bounds of $x$, then the restricted-domain $f$ will be defined
only for the fixed points of $f$ (since $f$ then can be expressed
as $f\left(f\left(x\right)\leq x\leq f\left(x\right)\right)$ which
then is equivalent to $f\left(x=f\left(x\right)\right)$, which states
that $f$ is defined only for its own fixed points.)

To the best of our mathematical knowledge (as of today), fixed points
of functions can be found, iteratively, if $P$ is a \emph{complete}
partial order (CPO) and the function $f$ being defined is monotonic
(\emph{i.e.}, if $\forall x,y\in P.\left(x\leq y\right)\implies\left(f\left(x\right)\leq f\left(y\right)\right)$).
But for general functions (\emph{i.e.}, ones that may not be monotonic)
defined over general (\emph{i.e.}, not necessarily complete) partial
orders, no general method for finding fixed points exists.

Further, if $P$ is a pointed CPO (\emph{i.e.}, has a least member,
$\bot$, usually called `bottom') and $f$ is a monotonic function
over $P$, then even a \emph{least} fixed point of $f$ is guaranteed
(by Banach/Tarski/Brouwer's theorems%
\begin{lyxgreyedout}
? TODO%
\end{lyxgreyedout}
) to exist. In that case the least fixed point (\emph{lfp}) of $f$
can be found simply by iterating the application of $f$ over $\bot$,
\emph{i.e.}, by computing the sequence $f\left(\bot\right)$, $f\left(f\left(\bot\right)\right)$,
$f\left(f\left(f\left(\bot\right)\right)\right)$, $\cdots$, until
a fixed point is found (\emph{i.e.}, until two successive values in
the sequence are the same).

Given, however, that while deciding the domain of dfbfs we are \emph{not
}specifically and explicitly seeking to find fixed points, we are
guessing that our problem (\emph{i.e.}, deciding the domains of dfbfs)
may be simpler than finding the fixed points, and thus in no need
of a completeness condition on $P$, in no need for a monotonicity
condition on $f$, and in no need for an explicit coinductive argument
in solving it (\emph{i.e.}, deciding the domain, as suggested by the
illustrating dfbfs from analysis).

A further reason for us to not consider seeking fixed points is the
context of our application (\emph{i.e.}, the context in which we wish
to apply our result). As we pointed out in Footnote~\ref{fn:TneqFT}
in the main body of this paper, in doubly F-bounded generics, due
to nominal subtyping (\emph{i.e.}, that subtyping has to be explicitly
declared), it is \emph{impossible} for any type \code{T} to be \emph{equal}
to the instantiation of a generic class \code{C} with type \code{T}
as the type argument of the class (\emph{i.e.}, in generic nominally-typed
OOP, for no type \code{T} can we have \code{T=C<T>}).

As such we can safely, \emph{i.e.}, without loss of generality, restrict
our attention to finding domains of dfbfs having definitions of the
form
\[
f\left(x<f\left(x\right)\right)
\]
(without an equality possibility) whose domain (a subset of $P$)
we call $P_{V}$ (the subset of $P$\emph{ }having values of $x$
that are\emph{ valid} as arguments to $f$).

It is our assertion that the domain $P{}_{V}$ of such a function
$f$ is the \emph{same} as the domain $P'_{V}$ of a dfbf (over $P$)
with a definition of the form
\[
f'\left(x<g\left(x\right)\right)
\]
where $g$ and $f'$ are functions that have the same ``expression''
as $f$%
, but where $g$ has/gives valid values corresponding to \emph{all}
elements of $P$ (\emph{i.e.}, the domain of $g$ is the whole of
$P$, and is not restricted to a subset of it).\footnote{Like the type \code{Enum<Object>}, the values $g$ produces for `admittable
but invalid values of $x$' (\emph{i.e.}, for $x\in P\backslash P_{V}$)
are also called `admittable but invalid values for $f$', \emph{i.e.},
ones that \emph{can} be obtained by not restricting the domain of
$f$ (\emph{i.e.}, are validly obtainable from $g$) but that cannot
be (validly) obtained from $f$.}
\begin{thm}
\label{thm:f=00003Df'}The f-bounded functions $f$ and $f'$ define
the same function, \emph{i.e.}, $f=f'$.\end{thm}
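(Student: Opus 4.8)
The plan is to separate the two assertions hidden in the equation $f=f'$: that the two functions have the same domain, and that they agree in value on that common domain. Write $P'_V=\{x\in P:x<g(x)\}$ for the domain of $f'$. This set is unambiguous precisely because $g$ is total on $P$, so the test $x<g(x)$ can be evaluated for every $x\in P$ with no appeal to any prior notion of validity. For the values, since $f$, $f'$ and $g$ are stipulated to share the same ``expression,'' we have $f'(x)=g(x)$, and wherever $f$ is defined we also have $f(x)=g(x)$; hence once the two domains are shown to coincide, the functions coincide pointwise. The entire content of the theorem therefore reduces to proving $P_V=P'_V$.

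First I would make precise the self-referential reading that defines $P_V$. Deciding that $x$ is valid for $f$ demands both that $x<f(x)$ and that the bound $f(x)$ is itself a legitimate value; and because $f$ is applied here to the very argument $x$, this second demand feeds back into the requirement that $x$ again be valid. I would model this as the greatest fixed point of the monotone operator $\Phi(S)=\{x\in P:x<g(x)\text{ and }x\in S\}$ on subsets of $P$, in which the outer clause is the bound test (with the value $f(x)$ written as $g(x)$) and the inner clause $x\in S$ records the recursive validity demand placed on that bound. Taking the greatest, rather than the least, fixed point is essential and is exactly the coinductive choice: the least fixed point collapses to $\emptyset$, which is plainly the wrong domain, whereas coinduction retains every element for which there is no positive reason to reject it. Observing that $\Phi(S)=P'_V\cap S$ then makes the situation transparent, since the greatest fixed point of $S\mapsto P'_V\cap S$ is $P'_V$ itself.

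The two inclusions fall out of this. The forward inclusion $P_V\subseteq P'_V$ is routine: if $x$ is valid for $f$ then $x<f(x)=g(x)$, so $x\in P'_V$. The reverse inclusion $P'_V\subseteq P_V$ is the crux and is where the coinductive argument is spent. Here I would check that $P'_V$ is a post-fixed point of $\Phi$, that is, $P'_V\subseteq\Phi(P'_V)$: for any $x\in P'_V$ the bound test $x<g(x)$ holds by definition of $P'_V$, and the inner validity demand $x\in P'_V$ holds trivially, so one round of $\Phi$ rejects nothing. Since $P_V$ is the greatest fixed point, it contains every post-fixed point, whence $P'_V\subseteq P_V$. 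Combining the inclusions gives $P_V=P'_V$, and together with the pointwise agreement of values already noted this yields $f=f'$.

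I expect the main obstacle to be conceptual rather than computational: justifying that the greatest-fixed-point semantics is the faithful formalization of the informal self-referential validity check, and in particular arguing that the recursive demand on the bound $f(x)$ contributes only the clause $x\in S$, so that $\Phi(S)=P'_V\cap S$ and its greatest fixed point is exactly $P'_V$. A secondary subtlety worth flagging is the role of the nominality assumption invoked earlier in the excerpt, which rules out $x=f(x)$ and lets us work throughout with the strict form $x<f(x)$; this is what guarantees that the fixed-point set is pinned down by the single bound test rather than by a more delicate interval condition.
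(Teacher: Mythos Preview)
Your proof follows the same two-step architecture as the paper's: first establish $P_V=P'_V$, then invoke extensionality using the shared expression of $f$ and $f'$. Both arguments locate the only real work in the inclusion $P'_V\subseteq P_V$ and both discharge it coinductively. The difference is one of explicitness. The paper argues that direction by contraposition (if $x\notin P_V$ then $x\not<g(x)$), justifying this in a footnote via the informal principle that ``the invalidity of $f(x)$ is not a good reason for $x<f(x)$ not to hold,'' and it openly concedes that it does \emph{not} present an explicit coinductive step. You instead fix a semantics for $P_V$ as the greatest fixed point of the monotone operator $\Phi(S)=P'_V\cap S$ and recover the inclusion from the standard post-fixed-point principle. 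This buys you an actual proof where the paper offers a sketch: your version supplies precisely the formal coinductive step the paper admits it lacks. The residual worry you flag---whether the greatest-fixed-point reading faithfully captures the informal self-referential validity check, and in particular why the recursive demand on the bound contributes only the clause $x\in S$---is equally the soft spot in the paper's argument, since neither you nor the paper has an independent prior definition of $P_V$ against which to validate the chosen semantics.
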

\begin{proof}
First, we prove that functions $f$ and $f'$ have the same domains.

We reason by cases as follows:

If $x\in P_{V}$ (\emph{i.e.}, is in the set of valid arguments to
$f$) then $\exists y\in P,x<y=f\left(x\right)=g(x)$, and thus we
also have $x\in P'_{V}$ (since $x<g\left(x\right)$).

If $x\not\in P_{V}$, then $f\left(x\right)$ is undefined, or, more
precisely, is an ``invalid value'', and, by coinductive reasoning~\cite{Kozen2016},
we know that $x\not<g(x)$ and thus, by the definition of $P'_{V}$
(\emph{i.e.}, the domain of $f'$), we have $x\not\in P'_{V}$.\footnote{If we had $x<g\left(x\right)$ then, by coinductive reasoning~\cite{Kozen2016},
we would also have $x<f\left(x\right)$ (since the invalidity of the
value $f\left(x\right)$ is \emph{not} a good reason for $x<f\left(x\right)$
not to hold), and thus, by the definition of $f$, we would have $x\in P_{V}$,
which is a contradiction. Thus, we have $x\not<g\left(x\right)$ and,
by the definition of $f'$, $x\not\in P'_{V}$. We believe coinductive
reasoning---even though, as usual, sounding as `a sleight of hand'~\cite{Kozen2016}
and although we do not present an explicit coinductive step---is correctly
used here, and that coinductive reasoning is used here \emph{once
and for all}, \emph{i.e.}, that there is no need for coinductive reasoning
to be used (or to be even mentioned) \emph{outside} this proof. In
particular, we believe Theorem~\ref{thm:f=00003Df'} should be used
directly (\emph{e.g.}, in analysis, in doubly F-bounded generics,
or elsewhere), without need to reference its coinductive proof.}%

As such, we have $P_{V}\subseteq P'_{V}$ and $P'_{V}\subseteq P_{V}$,
and thus $P_{V}=P'_{V}$.

Secondly, since, by our choice of $f'$, we have $\forall x\in P_{V},f\left(x\right)=f'\left(x\right)$,
then, using the extensionality of functions, given that $f$ and $f'$
have the same domain (and codomain), we have 
\[
f=f'.
\]

\end{proof}
An immediate consequence of proving Theorem~\ref{thm:f=00003Df'}
is that the reasoning method (\emph{i.e.}, assuming the bounding functions
of dfbfs to have \emph{unbounded} domains) that we used in Section~\ref{sec:input-side_rec}
of the main body of this paper (to decide the domains of doubly F-bounded
functions and doubly F-bounded generics, including even ones with
input-side recursion) is \emph{mathematically sound}.

A practical consequence of the proof, which we also discussed in the
main body of this paper, is that the Java type checker (\emph{i.e.},
during the compilation of Java programs) does \emph{not} need to resort
to infinite types or to explicit coinductive arguments when it is
checking the validity of type arguments of generic classes (\emph{e.g.},
during its checking of the validity of parameterized types).

\end{document}